\newcommand{\beq}{\begin{equation}}
\newcommand{\eeq}{\end{equation}}
\newcommand{\ket}[1]{|#1\rangle}
\newcommand{\bra}[1]{\langle#1|}
\newcommand{\proj}[1]{\ket{#1}\bra{#1}}
\newcommand{\braket}[2]{\langle #1 | #2 \rangle}
\newcommand{\Tr}{{\rm Tr}}
\newcommand{\psucc}{p_{\textup{succ}}}
\newtheorem{theorem}{Theorem}
\newtheorem{corollary}{Corollary}
\newtheorem{observation}{Observation}
\newtheorem{lemma}{Lemma}
\begin{document}

\title{Linear-optics realization of channels for single-photon multimode qudits}

\author{Marco Piani$^1$, David Pitkanen$^1$,Rainer Kaltenbaek$^{1,2}$,Norbert L{\"u}tkenhaus$^1$}

\affiliation{
$^1$Institute for Quantum Computing \& Department of Physics
and Astronomy, University of Waterloo, 200 University 
Avenue West, N2L 3G1 Waterloo, Ontario, Canada 
\\
$^2$Faculty of Physics, University of Vienna, Boltzmanngasse 5, A-1090 Vienna, Austria}

\date{\today}

\begin{abstract}
We propose and theoretically study a method for the stochastic realization of arbitrary quantum channels on multimode single-photon qudits. In order for our method to be undemanding in its implementation, we restrict our analysis to linear-optical techniques, vacuum ancillary states and non-adaptive schemes, but we allow for random switching between different optical networks. With our method it is possible to deterministically implement random-unitary channels and to stochastically implement general, non-unital channels. We provide an expression for the optimal probability of success of our scheme and calculate this quantity for specific examples like the qubit amplitude-damping channel. The success probability is shown to be related to the entanglement properties of the Choi-Jamio{\l}kowski state isomorphic to the channel.
\end{abstract}

\maketitle

\section{Introduction}

The most general transformation a quantum state can undergo is described by a quantum channel. For example, it may correspond to a controlled manipulation of a quantum system for some final aim---like in quantum information-processing protocols~\cite{nielsen00a}---or it may represent an unwanted interaction with the environment. While in the first case implementing the respective quantum channel is of direct practical interest, in the second case one may still be interested in the implementation of the channel for the sake of understanding the role of noise, and how to counteract it, in real-world implementations of quantum information-processing protocols. It is worth remarking that striking effects in quantum information processing (QIP), e.g., the super-activation of the quantum capacity of channels~\cite{GraemeSmith09262008}, involve non-trivial noisy channels. 

Quantum optics is one of the best established physical architectures for QIP~\cite{knill01a,Mattle1996a,bouwmeester97a,Pan2003a,prevedel07a}. It has the advantage that the carriers of information---photons---interact naturally weakly with the environment, so that \emph{real} noise is low. This makes \emph{simulating} noise possible in a very controlled way. The workhorse of optical experiments is the manipulation via linear optical elements, such as beam splitters and phase-shifters. The linear-optics realization of channels has appeared in a number of works for specific cases. For example, random-unitary channels are common in experiments on decoherence-free and unitarily recoverable subspaces~\cite{Prevedel2007b,Schreiter:2009p1583} and in the realization of mixed states \cite{Amselem2009a,Lavoie2010a}. The simplest non-trivial example of a channel that is not random-unitary is perhaps given by the qubit amplitude-damping channel \cite{nielsen00a}. The counting statistics of this channel have been simulated using linear optics \cite{Almeida:2007p1504}, and a stochastic linear-optical implementation with a fixed success probability of $50 \% $, independent of the value of the damping parameter, has been suggested \cite{Qing:2007p1454}.

In this article, we propose a linear-optics scheme for the stochastic exact realization of an arbitrary channel for single-photon multimode qudits. Under constraints motivated by the ease of experimental realization, our scheme achieves an optimal probability of success. An interesting result is that such a success probability is related to the entanglement properties of the Choi-Jamio{\l}kowski state isomorphic to the channel~\cite{jamiolkowski72a,choi75a}. This connection allows us to apply results in entanglement theory~\cite{Horodecki:2009p2733} to the quite different problem of channel realization.

Our results provide an optimal strategy for the realization of arbitrary channels, an important building block in experimental studies of QIP. In the specific case of the qubit amplitude-damping channel, our scheme provides a significantly higher efficiency than alternative schemes~\cite{Qing:2007p1454} without leaving the subspace of the encoding of the input state. In contrast to \cite{Almeida:2007p1504}, this allows us to further process the output of the channel.

The paper is structured as follows. In Section \ref{sec:definitions}, we provide definitions, fixing both the framework and the notation. In Section \ref{sec:theproblem}, we illustrate in detail the problem we consider, that is, the realization of a quantum channel with a fixed set of tools. In Section \ref{sec:solution}, we provide a scheme to realize any channel perfectly albeit only stochastically. In Section \ref{sec:entanglement}, we relate the optimal success probability of the method proposed to the entanglement properties of the Choi-Jamio{\l}kowski state isomorphic to the channel of interest. In Section \ref{sub:Bounds}, we use this relation to provide bounds on the probability of success, both in the specific case of qubits, for which we are able to give analytic bounds, and qudits. In Section \ref{sec:examples}, we apply our technique to two examples, one being the qubit amplitude damping channel. Finally, we conclude and discuss possible future venues to investigate.

\section{Definitions and framework}
\label{sec:definitions}

The state of a quantum system may change over time due to some internal dynamics, to an interaction with its
environment or to a measurement performed on it by an observer.
Any physical transformation a quantum system can experience can be modeled as a quantum channel $\Lambda:\rho_{\text{in}}\mapsto\rho_{\text{out}}$.
Every channel acting on a system $S$ admits a dilation, that means, it can be realized as some unitary interaction with an ancilla $E$, which is subsequently discarded:
\[
\Lambda[\rho_S]=\Tr_E(U_{SE}\,\rho_{S}\otimes\sigma_{E}\,U_{SE}^\dagger),
\]
with $\sigma_E$ the initial state of the ancilla~\cite{nielsen00a}.
More abstractly a quantum channel can be defined as a completely positive trace-preserving linear map.  Each channel can be represented in the form $\Lambda[\rho]=\sum_iA_i\rho A_i^\dagger$, where $\{A_i\}$ is a set of \emph{Kraus operators} fulfilling the trace-preserving condition, $\sum_i A_i^\dagger A_i=I$. The Kraus representation of a channel is not unique. For instance, if $\{A_i\}$ forms a \emph{Kraus decomposition} of a channel $\Lambda$, the relation $B_i = \sum_j u_{ij} A_j$, assuming $u_{ij}$ are the elements of a unitary matrix, will define a new decomposition $\{B_i\}$ for $\Lambda$~\cite{nielsen00a}. 

We will frequently find the notion of operator norm useful in our discussions of quantum channels. Since we will always work in finite dimensions, the operator norm $\|A\|_\infty$ of $A$ corresponds to the largest singular value of $A$. An operator is an admissible Kraus operator---that is, it can be considered as part of some valid Kraus-operator set---as long as $\|A\|_\infty\leq1$. Any set of linear operators that satisfy the completion relation $\sum_{i=1}^{k}A_{i}^{\dagger}A_{i}=I$ will constitute a valid quantum channel.

In this paper, we will be interested in optical quantum systems. Each
mode of an optical system is associated to a basis of Fock states
$| n\rangle$, where $n=0,1,2...$ denotes the number of photons
in the mode. The creation and annihilation operators, $a^\dagger$ and $a$,
respectively, provide a convenient notational framework for describing Fock states because of the relations $a| n\rangle=\sqrt{n}| n-1\rangle$, $a^{\dagger}| n\rangle=\sqrt{n+1}| n+1\rangle$, so that 
$| n\rangle=\left(a^{\dagger}\right)^{n}/\sqrt{n!}|0\rangle$. These operators have commutation relations $[a_{i},a_{j}^{\dagger}]=\delta_{ij}$, $[a_{i}^{\dagger},a_{j}^{\dagger}]=0$
and $[a_{i},a_{j}]=0$, where the indices $i$ and $j$ denote the optical mode and $\delta_{ij}$ is the Kronecker delta.

The most common optical elements that are used in experiments for the manipulation of optical modes are beam splitters and phase shifters. Optical networks that are composed only of instances of these two elements are referred to as passive linear devices. Linear (quantum) optics is the part of quantum optics that, apart from the initial generation of entangled photon pairs and single-photon detection, deals only with passive linear devices \cite{knill01a}.
Any unitary transformation $U$ acting on $d$ optical modes and preserving the total photon number can conveniently be described by the way it transforms the creation operators of the
modes:
\beq
\label{eq:unitary_lin_optics}
a_i^{\text{out}\dagger}=\sum_ju_{ij}a_j^{\text{in}\dagger},
\eeq
where $u_{ij}$ are the elements of a unitary matrix. A transformation can be realized by linear optics if and only if it is of this kind.

A phase shifter is an optical element that acts on a single mode as $U a^{\dagger}U^{\dagger}=e^{i\phi}a^{\dagger}$. A beam splitter acts on two optical modes at a time and can be
described by
\begin{equation}
\label{eq:BS}
(u_{ij})=\left(\begin{array}{cc}
\cos\theta & -e^{i\phi}\sin\theta\\
e^{-i\phi}\sin\theta & \cos\theta\end{array}\right).
\end{equation}
Any unitary that acts on $d$ modes preserves the total photon number if and only if
it can be implemented using these two devices~\cite{reck94a}. 

\section{The problem}
\label{sec:theproblem}

A qudit can be encoded by using one photon in $d$ optical modes. An arbitrary logical state can be written as $\ket{\psi}=\sum_{i=1}^d\psi_i\ket{i_{\text L}}$, with a logical basis $\left\{ | i_{L}\rangle\right\} _{i=1}^{d}$, where $\ket{i_{\text L}}=a_i^\dagger\ket{0}$. We call this kind of encoding a \emph{$d$-rail encoding}.
This encoding is convenient when the interactions
are limited to linear optics, because any unitary operation \eqref{eq:unitary_lin_optics} can be
performed on the creation operators using linear optics, and under
this encoding the basis states of a single qudit and the creation operators transform
identically.

We are interested in the simulation of an arbitrary quantum channel $\Lambda$ that acts on a qudit, using only passive linear optics. What we want is a realization of $\Lambda$ on the $d$-rail qudit, such that the logical subspace---the encoding---is mapped onto itself. This allows for further processing of the output of the channel.

We will refer to the channel to be realized as the \emph{logical channel}, to distinguish it from physical channels that evolve the state of the modes without necessarily preserving the logical subspace.

As we noted earlier, we can always represent a channel in the form of a dilation where the channel is realized via the unitary interaction of the system with ancillary modes. We will limit ourselves to linear-optics evolution. For the sake of the ease of experimental implementation, we will assume several other reasonable restrictions: (i) to limit the number of photons that need to be generated, we only introduce ancillary modes that are initially in the vacuum state; (ii) in order to prevent the necessity of using expensive feed-forward mechanisms (Pockels cells and high-speed high-voltage switches --- see, e.g., \cite{prevedel07a, Biggerstaff2009a}), we do not allow adaptive schemes; (iii) we will restrict ourselves to photon-number measurements, although it will actually turn out that commonly used threshold detectors suffice.

When we consider the dilation representation of a channel, we can imagine that the final trace over the ancillary space corresponds to a measurement of the ancilla, whose result is discarded. If we assume that the ancilla starts in the (vacuum) state $\ket{0}_E$, then we have
\begin{multline*}
\Lambda[\rho_S]\\
\begin{aligned}
&=\Tr_E(U_{SE}\rho_{S}\otimes\sigma_{E}U_{SE}^\dagger)\\
			&=\sum_k \Tr_E(U_{SE}\rho_{S}\otimes\sigma_{E}U_{SE}^\dagger M^k_E)\\
			&=\sum_{jk} \left(\bra{j}_E \sqrt{M_k^E} U_{SE}\ket{0} \right) \rho_S  \left( \bra{j}_E \sqrt{M_k^E} U_{SE}\ket{0}\right)^\dagger
\end{aligned}
\end{multline*}
with $\{M_k\}$, $M_k\geq0$, $\sum_k M_k=I$ a POVM on the ancilla system $E$, and $\{\ket{j}\}$ an eigenbasis for $\sigma_E$.
With our constraints---vacuum input ancillas and linear optics evolution---measuring the vacuum on the output ancillas is the only result that leaves the system within the encoding.
This can be seen easily considering the action of the linear optics unitary $U_{SE}=U_{\text{LO}}$ on initial states $\ket{i_{\text{L}}}\ket{0}_{\text{E}}$, $i=1,\dots,d$. We will consider $d+e$ modes, with the first $d$ used for the encoding, and the remaining $e$ constituting the ancilla system $E$. Then we have:
\[
\begin{aligned}
U_{\text{LO}}\ket{i_{\text{L}}}_S\ket{0}_{\text{E}}&=U_{\text{LO}}a_i^\dagger\ket{0}_S\ket{0}_{E}\\
			&=\sum_{j=1}^{d+e} u_{ij} a_j^\dagger\ket{0}_S\ket{0}_{E}\\
			&=\left(\sum_{j=1}^{d} u_{ij} \ket{j_{\text{L}}}\right)\ket{0}_{E}\\
			&+\ket{0}_S\left(\sum_{j=d+1}^{d+e} u_{ij} a_j^\dagger \ket{0}_{E}\right).
\end{aligned}
\]
From this expression it is evident that if we perform a photon-number measurement on the output ancillary modes and we obtain a result different from the vacuum, then the encoding is lost. 
The reason for this is that linear optics preserves the photon number
and the initial state of the system $\ket{i_{\text{L}}}\ket{0}_{\text{E}}$
only has one photon in it. If the photon is measured in the ancilla, then the initial
state $\ket{i_{\text{L}}}$ will be mapped out of the encoding to the
vacuum, independently of which output ancilla mode the photon is measured in.

Therefore, under the constraints that we have imposed, the only logical channels that can be realized deterministically must have a single Kraus operator. Such channels are necessarily unitary transformations, as it can be seen by the trace-preservation condition $A^{\dagger}A=I$.

\section{The solution: stochastic implementation}
\label{sec:solution}

In this section, we will first see that \emph{any} single logical Kraus operator (i.e., any Kraus operator of the logical channel) can be realized stochastically. Later we will introduce a further resource, randomness, and the ability to switch---according to such randomness---among different optical networks, and we will show that then any logical channel can be realized, albeit only stochastically.

\subsection{Implementation of a logical Kraus operator}

For any logical Kraus operator $A$ that we want to apply to the input state, it is possible to construct an optical network such that $A$ will correspond to the transformation of the logical state if the output ancillary modes are detected to be in the vacuum state, given that they were in the vacuum state before the channel. Every Kraus operator has a singular value decomposition $A=VSU$, where $U$ and $V$ are unitaries
and the matrix $S$ is positive and diagonal, with diagonal elements $0\leq s_{i}\le1$ that correspond to the singular values of $A$.
As unitary rotations can be realized deterministically on the encoding, in order to prove that A can be realized under our constraints, it is sufficient to prove that any diagonal matrix S can be realized
(see Figure
\ref{fig:BSrealization}).

This is proven to be possible by considering the action \eqref{eq:BS} of a beamsplitter on two modes.
If the first mode, with creation operator $a^\dagger$, belongs to the encoding and the second mode is an
ancilla---which means it starts in the vacuum---then the transformation
that results when the vacuum is measured on the ancilla state effectively
realizes the mapping $a^{\dagger}\mapsto\cos\theta a^{\dagger}$.
 Since the angle $\theta$ is arbitrary, we can simply implement any diagonal logical 
Kraus operator $S$ by using $d$ ancillary modes and $d$ beamsplitter, choosing the angles $\theta_i$ such that $s_{i}=\cos\theta_{i}$.%
\begin{figure}
\includegraphics[scale=0.5]{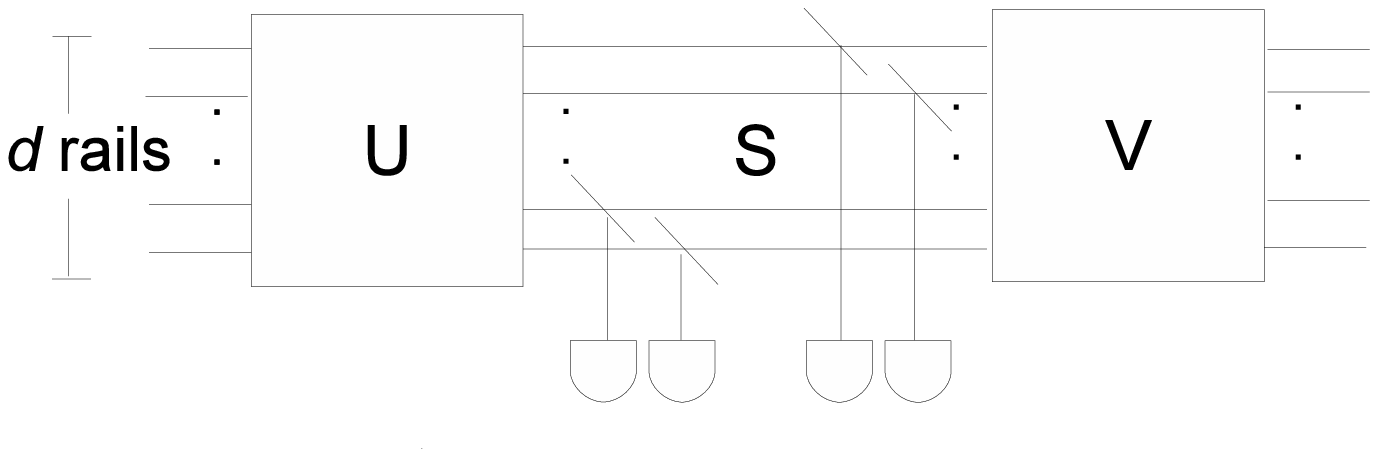}
\caption{The diagram describes an optical circuit for a channel that would
realize the Kraus operator $A=VSU$ as well as a Kraus operator that
would map the encoding to the vacuum. The boxes represent optical
arrays that perform the unitary that labels them. The $S$ transformation then consists
of a set of beamsplitters, one for each mode, whose transmission coefficients
are matched to the singular values of the matrix $S$.}
\label{fig:BSrealization}
\end{figure}

\subsection{Perfect but stochastic implementation of an arbitrary logical channel\label{sub:Impmnt-any-ch}}

A logical channel $\Lambda$ that we may want to apply on the encoding will in general have a Kraus decomposition $\{A_{i}\}_{i=1}^{n}$, with $n\geq1$. Therefore, by using a fixed linear optical network in the framework defined in Section \ref{sec:theproblem} it will not be possible in general to simulate the channel, as only one logical Kraus operator can be realized per fixed optical network.

We will circumvent this problem by realizing individually the various Kraus operators  $A_{i}$, $i=1,\ldots,n$, in this way being able to preserve the encoding for each $A_i$. Roughly speaking, by randomly applying the different Kraus operators the logical channel $\Lambda$ will be realized.
Of course, this is possible only by allowing the linear optical network to change. We will introduce the possibility of switching among various optical networks---one for each $A_i$---according to a probability distribution $\{p_i\}$. Each fixed optical network that we will introduce to realize the Kraus operator $A_i$ will itself correspond to a quantum channel $\Gamma_i$ (see Figure
\ref{fig:BBox}).
This ``average realization'' of the logical channel will anyway be stochastic, because in the implementation of any $A_i$ that is not unitary there will necessarily be a finite probability of ending up outside the encoding, which corresponds to finding the input photon in the output ancillary modes.

One important point is that, given the additional degree of freedom due to the choice of the probability distribution $\{p_i\}$, it is possible to consider the realization of a rescaled version $\tilde{A}_i$ of $A_i$  rather than exactly $A_i$. Of course each $\tilde{A}_i$ must be a valid Kraus operator, i.e., $\|\tilde{A}_i\|_\infty\leq1$. We will use this rescaling degree of freedom to maximize the success probability for the realization of the channel.

If we postselect on finding the output ancillary modes in the vacuum state, and if we choose the probability distribution $\{p_i\}$ and the $\tilde{A}_i$ operators such that $\sqrt{p}_{i}\tilde{A}_{i}=\sqrt{p_{\text{succ}}}A_{i}$ for all $i$ and for some $0\leq p_{\text{succ}}\leq1$, then the logical input state $\rho$ will be mapped into the (unnormalized) logical state
\[
\sum_i p_i \tilde{A}_i \rho \tilde{A}_i^\dagger=p_{\text{succ}}\sum_i A_i\rho A_i^\dagger.
\]
This will happen with probability $\Tr(\sum_i p_i \tilde{A}_i \rho \tilde{A}_i^\dagger)=p_{\text{succ}}$, and thus the logical channel $\Lambda$ will be stochastically implemented with probability $p_{\text{succ}}$ (independent of the input $\rho$).

Given that we want the channel to be realized perfectly, the figure of merit we care about is the probability of success $p_{\text{succ}}$, which we want to be maximal. One possible choice for the distribution $\{p_i\}$ and the operators $\tilde{A}_i$ is trivially $p_i=1/n$ and $\tilde{A}_i=A_i$; this choice leads to a probability of success $p_{\text{succ}}=1/n$. This strategy is independent of the properties of the Kraus operator $\{A_i\}$ for the particular channel $\Lambda$, and depends only on the number of Kraus operators. As such, one can expect it to be non-optimal, and it certainly is in the case of a random-unitary channel
\[
\Lambda[\rho]=\sum_iq_iU_i\rho U_i^\dagger,
\]
with $\{U_i\}$ unitaries and $\{q_i\}$ a probability distribution. Indeed, in this case an obvious better choice---and actually optimal---is $p_i=q_i$, $\tilde{A}_i=U_i$, for all $i$, such that $p_{\text{succ}}=1$.

The following theorem provides the optimal choice of the probability distribution $\{p_i\}$ and of the operators $\tilde{A}_i$'s to maximize $p_{\text{succ}}$, for any fixed Kraus decomposition $\{A_i\}$.

\begin{figure}
\includegraphics[scale=0.5]{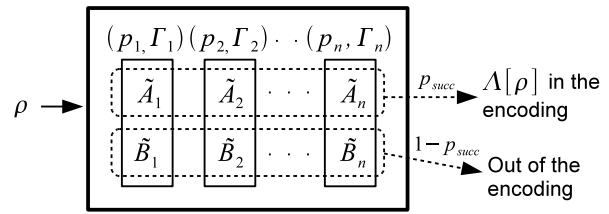}
\caption{Pictorial representation of our scheme. Each solid rectangle represents a channel. The most external box
is a mixture of the $n$ channels $\Gamma_i$ inside of it. Each of these inner channels
corresponds to a linear-optics setup and for our scope its action on the encoding can be completely described without loss of generality by two Kraus operators, $\tilde{A}_i$ and $\tilde{B}_i$. Each $\tilde{A}_i$
preserves the $d$-rail encoding, while the Kraus operators $\tilde{B}_i$ map an encoded state out of the encoding. If the condition $\sqrt{p_{i}}\tilde{A}_{i}=\sqrt{\psucc}A_{i}$, for all $i$, is met, the overall result of randomly switching among the channels $\Gamma_i$ according to the probability distribution $\{p_i\}$ is that of realizing the target logical channel $\Lambda$ with probability $\psucc$ (independent of the input $\rho$). }
\label{fig:BBox}
\end{figure}

\begin{theorem} Given the Kraus decomposition $\{A _{i}\}$
for the channel $\Lambda$, the optimal probability of success for its realization is
\beq
p_{\textup{succ}}(\{A_{i}\})=\frac{1}{\sum_i \|A_{i}\|_{\infty}^{2}}.
\eeq
This can be achieved by the choice $p_{i}=\frac{\|A_{i}\|_{\infty}^{2}}{\sum_{j}\|A_{j}\|_{\infty}^{2}}$
and $\tilde{A}_{i}=\frac{1}{\|A_{i}\|_{\infty}}A_{i}$, for all ${i}$.
\end{theorem}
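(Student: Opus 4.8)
The plan is to establish matching upper and lower bounds on $\psucc$, treating the optimization over the pair $(\{p_i\},\{\tilde{A}_i\})$ subject to the constraints that each $\tilde{A}_i$ be an admissible Kraus operator ($\|\tilde{A}_i\|_\infty\leq 1$), that $\{p_i\}$ be a probability distribution, and that $\sqrt{p_i}\,\tilde{A}_i=\sqrt{\psucc}\,A_i$ for all $i$. For the upper bound I would take the operator norm on both sides of the defining relation, obtaining $\sqrt{p_i}\,\|\tilde{A}_i\|_\infty=\sqrt{\psucc}\,\|A_i\|_\infty$. Using $\|\tilde{A}_i\|_\infty\leq 1$ this gives $p_i\geq \psucc\,\|A_i\|_\infty^2$ for every $i$; summing over $i$ and invoking $\sum_i p_i=1$ yields $1\geq\psucc\sum_i\|A_i\|_\infty^2$, hence $\psucc\leq\bigl(\sum_i\|A_i\|_\infty^2\bigr)^{-1}$.

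For the lower bound (achievability) I would simply substitute the proposed choice $p_i=\|A_i\|_\infty^2/\sum_j\|A_j\|_\infty^2$ and $\tilde{A}_i=A_i/\|A_i\|_\infty$ and verify the three requirements in turn: (a) $\{p_i\}$ is a probability distribution, since each term is nonnegative and they manifestly sum to one; (b) each $\tilde{A}_i$ is admissible, because $\|\tilde{A}_i\|_\infty=1$; (c) the key relation holds, as $\sqrt{p_i}\,\tilde{A}_i=A_i/\sqrt{\sum_j\|A_j\|_\infty^2}$, which is exactly $\sqrt{\psucc}\,A_i$ with $\psucc=\bigl(\sum_j\|A_j\|_\infty^2\bigr)^{-1}$. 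Since this matches the upper bound, the value and the stated choice are optimal.

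Along the way I would dispose of two minor points. First, a vanishing Kraus operator $A_i=0$ contributes nothing to $\Lambda$ and can be dropped (equivalently assigned $p_i=0$), so in the achievability step I may assume $\|A_i\|_\infty>0$ for all $i$. Second, as a consistency check that the formula returns a genuine probability, I would note that trace preservation $\sum_i A_i^\dagger A_i=I$ together with the triangle inequality for $\|\cdot\|_\infty$ gives $\sum_i\|A_i\|_\infty^2=\sum_i\|A_i^\dagger A_i\|_\infty\geq\bigl\|\sum_i A_i^\dagger A_i\bigr\|_\infty=1$, so indeed $\psucc\leq 1$.

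The argument is essentially a one-line application of the operator norm, so I do not expect a genuine obstacle; the only subtlety worth emphasizing is that the constraint couples $\{p_i\}$ and $\{\tilde{A}_i\}$ only through their norms, so saturating $\|\tilde{A}_i\|_\infty=1$ is simultaneously optimal for every $i$, and this forces the stated rescaling $\tilde{A}_i=A_i/\|A_i\|_\infty$ uniquely.
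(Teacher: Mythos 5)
Your argument is correct and follows essentially the same route as the paper: bound $p_i\geq\psucc\|A_i\|_\infty^2$ from the defining relation and $\|\tilde{A}_i\|_\infty\leq1$, sum using $\sum_i p_i=1$, and check that the stated choice saturates the bound. The extra remarks (dropping vanishing $A_i$, and verifying $\psucc\leq1$ via trace preservation) are fine but not needed.
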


\begin{proof}
From the condition $\sqrt{p_{i}}\tilde{A}_{i}=\sqrt{\psucc } A_{i}$, for all $i$,
one finds
$p_i\geq p_i\|\tilde{A}_i\|^2=\psucc\|A_i\|^2_\infty$, where we used the fact that $\|\tilde{A}_i\|_\infty \leq 1$, because each $\tilde{A}_i$ must be a proper Kraus operator. Summing over $i$ and using $\sum_{i}p_{i}=1$, one arrives at $p_{\text{succ}}\leq1/\sum _i \|A_{i}\|_{\infty}^{2}$.
The probability distribution and Kraus operators in the statement of the theorem saturate the inequality.
\end{proof}

Thus, the maximal probability of simulating the channel adopting the Kraus decomposition $\{A_i\}$ in our scheme is the inverse of $\sum_i \|A_{i}\|_{\infty}^{2}$.
This quantity will in general depend on the specific Kraus decomposition. By optimizing over all Kraus decompositions we have the following.

\begin{corollary}
\emph{(Optimal probability of success)}
In our scheme, the optimal probability of success in the implementation of $\Lambda$ is
\beq
\label{eq:psucc}
p_{\textup{succ}}(\Lambda)=\max_{\{A_i\}}\frac{1}{\sum_i\|A_{i}\|_{\infty}^{2}},
\eeq
where the maximization is over all Kraus decompositions $\{A_i\}$ of the channel $\Lambda$.
\end{corollary}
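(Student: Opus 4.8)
The plan is to deduce the statement directly from the preceding Theorem, the only extra ingredient being a clean identification of what freedom our scheme actually offers. Recall (see Fig.~\ref{fig:BBox} and Sec.~\ref{sub:Impmnt-any-ch}) that any implementation in our framework consists of a probabilistic choice, with weights $\{p_i\}$, among fixed optical networks $\Gamma_i$, each acting on the encoding through a single encoding-preserving Kraus operator $\tilde A_i$ (with $\|\tilde A_i\|_\infty\le 1$) on the ``vacuum-detected'' branch and through some $\tilde B_i$ on the complementary branch. Conditioned on the ancillas being found in the vacuum, the overall map is therefore $\rho\mapsto\sum_i p_i\,\tilde A_i\rho\tilde A_i^\dagger$, and requiring this to equal $\psucc\,\Lambda[\rho]$ for all $\rho$ (with $\psucc$ the trace of that map) is precisely the requirement that $\{C_i:=\sqrt{p_i/\psucc}\,\tilde A_i\}$ be a Kraus decomposition of $\Lambda$. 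Conversely, by the single-Kraus-operator realization and the construction of Sec.~\ref{sub:Impmnt-any-ch}, \emph{every} Kraus decomposition of $\Lambda$ can be used in this way.

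Given this, the bound is immediate. For the decomposition $\{C_i\}$ attached to an arbitrary implementation one has $\|C_i\|_\infty^2=(p_i/\psucc)\|\tilde A_i\|_\infty^2\le p_i/\psucc$; summing over $i$ and using $\sum_i p_i=1$ yields $\psucc\le 1/\sum_i\|C_i\|_\infty^2\le\max_{\{A_i\}}1/\sum_i\|A_i\|_\infty^2$, the maximum running over all Kraus decompositions of $\Lambda$. (This is just the upper bound of the Theorem applied to the decomposition $\{C_i\}$.) For the reverse inequality I would invoke the Theorem once more: using a decomposition that attains the maximum, together with the weights and rescaled operators prescribed there, realizes $\Lambda$ with exactly $\psucc=\max_{\{A_i\}}1/\sum_i\|A_i\|_\infty^2$. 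Combining the two directions gives Eq.~\eqref{eq:psucc}.

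The one step I expect to need care is the legitimacy of writing a maximum rather than a supremum over the infinite family of Kraus decompositions. I would handle this by noting that appending zero operators leaves $\sum_i\|A_i\|_\infty^2$ unchanged, so nothing is lost by bounding the number of Kraus operators, and by a Carath\'eodory-type argument---most cleanly phrased in terms of pure-state ensembles of the Choi--Jamio{\l}kowski state, as in Sec.~\ref{sec:entanglement}, where $\sum_i\|A_i\|_\infty^2$ becomes (up to a dimension factor) an ensemble average of the largest squared Schmidt coefficient---that the infimum is already attained among decompositions of some fixed finite size; over that set, which is compact, $\{A_i\}\mapsto\sum_i\|A_i\|_\infty^2$ is continuous and hence attains its minimum. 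Everything else is a direct consequence of the Theorem, so this attainment issue is really the only, and rather mild, obstacle.
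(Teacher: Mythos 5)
Your proposal is correct and takes essentially the same route as the paper: the corollary is obtained directly from the Theorem by noting that every admissible implementation (weights $p_i$ and encoding-preserving operators $\tilde A_i$ with $\|\tilde A_i\|_\infty\le 1$) corresponds to some Kraus decomposition of $\Lambda$, whose success probability is bounded by the Theorem, and that the Theorem's prescription achieves the bound for the optimizing decomposition. The extra care you devote to attainment of the maximum (Carath\'eodory reduction to bounded-size ensembles of $J(\Lambda)$ plus compactness and continuity) merely makes explicit a point the paper leaves implicit, and is consistent with its remark in Section \ref{sub:Bounds} about restricting to ensembles of $r^2$ pure states.
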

For convenience in the analysis to follow, we define the  \emph{stochasticity} of a channel as
\beq
\label{eq:stochasticity}
\sigma(\Lambda)=\min_{\{A_i\}} \sum_i\|A_{i}\|_{\infty}^{2},
\eeq
where the minimization is over all Kraus decompositions $\{A_i\}$ of the channel $\Lambda$, so that
\[
p_{\textup{succ}}(\Lambda)=\frac{1}{\sigma(\Lambda)}.
\]
The name ``stochasticity'' is justified by the fact that the larger $\sigma(\Lambda)$, the lower the probability of a successful realization of the channel.

We remark that any specific Kraus decomposition will give an upper (lower) bound on the stochasticity (optimal probability of success).

\section{Relation with entanglement measures\label{sec:Connect-Ent}}
\label{sec:entanglement}

The optimal success probability $p_{\textup{succ}}(\Lambda)$ of implementing a channel
is clearly just a property of the channel itself. Therefore it appears natural to look for a representation
of the channel that is independent of any specific Kraus decomposition. This can be done by considering the Choi-Jamio{\l}kowski isomorphism.

The latter is a one-to-one mapping between maps and operators~\cite{jamiolkowski72a,choi75a}.
The isomorphism---explicitly in the direction from maps to operators---is defined as
\beq
\begin{split}
\label{eq:CJ}
J(\Lambda) & = \left(\Lambda\otimes I\right)\left[\psi_{d}^{+}\right]\\
 & = \frac{1}{d}\sum_{i,j=1}^{d}\Lambda(| i\rangle\langle j|)\otimes| i\rangle\langle j|,
 \end{split}
 \eeq
for some fixed choice of a maximally entangled state
\beq
\label{eq:maxent}
|\psi_{d}^{+}\rangle=\frac{1}{\sqrt{d}}\sum_{i=1}^{d}| i\rangle| i\rangle.
\eeq

For our purpose, the interesting observation is that pure ensemble decompositions $\{p_i,\psi_i\}$ of the Choi-Jamio{\l}kowski state $J_\Lambda$ isomorphic to a channel $\Lambda$ are in one-to-one correspondence with Kraus decompositions of $\Lambda$. This follows from the the fact that for any vector $\ket{\bar{\psi}}\in\mathbb{C}^d\otimes\mathbb{C}^d$ there is an operator $A^{\bar{\psi}}$ such that
\beq
\label{eq:CJ_linop}
|\bar{\psi}\rangle=(A^{\bar{\psi}}\otimes I)|\psi_{d}^{+}\rangle.
\eeq
Here, the bar in $\ket{\bar{\psi}}$  denotes that the vector need not be normalized. In general, unless it is specified to the contrary with use of the bar notation, all states $\ket{\psi}$ are assumed to be normalized.
Thus,
\begin{equation}
\begin{split}
J(\Lambda) &=  \sum_{i}p_{i}|\psi_{i}\rangle\langle\psi_{i}|\\
 & = \sum_{i}|\bar{\psi}_{i}\rangle\langle\bar{\psi}_{i}|\\
 & = \sum_{i}(A^{\bar{\psi}_{i}}\otimes I)|\psi_{d}^{+}\rangle\langle\psi_{d}^{+}|({A^{\bar{\psi}_{i}\dagger}}\otimes I),
 \end{split}
 \end{equation}
for $|\bar{\psi}_{i}\rangle=\sqrt{p_i}|\psi_{i}\rangle=(A^{\bar{\psi}_{i}}\otimes I)|\psi_{d}^{+}\rangle$.

As we have seen, without the use of randomness the only channels that can be realized deterministically are unitaries, and using randomness and switching among optical networks we can extend this result only to random unitaries. Thus, we have that
the only channels that can be realized deterministically in our framework are those whose Choi-Jamio{\l}kowski state admits an ensemble consisting only of maximally entangled states.

One then expects that channels whose probability of realization is high admit Kraus decompositions that are close to random-unitary. In turn this would mean that their Choi-Jamio{\l}kowski states admit ensemble decompositions that are highly entangled. We will show that this intuition is correct.
 
The relation \eqref{eq:CJ_linop} implies
\beq
\label{eq:relnorm}
\|A^{\bar{\psi}}\|^2_\infty=d \times \lambda_{\text{max}}(\bar\psi),
\eeq
if  we consider the Schmidt decomposition $\ket{\bar{\psi}}=\sum_i\sqrt{\lambda_i}\ket{i}\ket{i}$, with $\lambda_i\geq 0$, $\sum_i\lambda_i=\braket{\bar{\psi}}{\bar{\psi}}$, and $\lambda_{\text{max}}=\max_i\{\lambda_i\}$. Thus, we find for the stochasticity
\begin{subequations}
\label{eq:connection1}
\begin{align}
\sigma(\Lambda)&=\min_{\{A_i\}}\sum_i\|A_i\|_\infty^2\\
			&=d\min_{\{p_i,\psi_i\}}\sum_ip_i\lambda_{\text{max}}(\psi_i)\label{eq:stocschmidt}\\
				&=d\left(1-\max_{\{p_i,\psi_i\}}\sum_ip_i(1-\lambda_{\text{max}}(\psi_i))\right)\\
				&=d\left(1-\max_{\{p_i,\psi_i\}}\sum_ip_iE_G(\psi_i)\right)
\end{align}
\end{subequations}
where we used \eqref{eq:relnorm} to move from the minimization over Kraus decompositions for $\Lambda$ to the minimization over ensemble decompositions for $J(\Lambda)$. The quantity
\[
E_G(\psi)=1-\lambda_{\text{max}}(\psi)=1-\max_{\alpha,\beta}|\braket{\alpha,\beta}{\psi}|^2,
\]
where the maximum is taken with respect to factorized pure states $\ket{\alpha,\beta}=\ket{\alpha}\ket{\beta}$, is the \emph{geometric measure of entanglement} for a bipartite pure state~\cite{wei03a}. More generally, for a multipartite pure state, the geometric measure of entanglement is defined as $E_G(\psi)=1-\max_{\phi_{\text{sep}}}|\braket{\phi_{\text{sep}}}{\psi}|^2$, with $\phi_{\text{sep}}$ a fully separable state. In the bipartite case, it coincides with the entanglement measure $E_2$, which was defined in \cite{Vidal:1999p955} as one of a whole family of entanglement measures. The geometric measure of entanglement has received a good deal of attention \cite{Wei:2004p2746,Plenio:2007p1778} because of its intuitive---even in the multipartite case---geometric interpretation as maximal overlap of the state of interest with a fully separable state,  and because of its connections to other well-known entanglement measures, like relative entropy of entanglement \cite{PhysRevLett.78.2275,PhysRevA.57.1619}. In the bipartite qudit case we are interested in here, one sees immediately that
\beq
\label{eq:geobounds}
0\leq E_G(\psi) \leq 1-\frac{1}{d}.
\eeq
The lower bound is achieved for a factorized pure state, while the upper bound corresponds to a maximally entangled state like the one in Eq.~\eqref{eq:maxent}.

The geometric measure of entanglement is extended to the mixed-state case by the usual convex-roof construction~\cite{Uhlmann:1998p1153}:
\beq
E^{\cup}_G(\rho)=\min_{\{p_i,\psi_i\}}\sum_ip_iE_G(\psi_i),
\eeq
where we use $^\cup$ to stress that the resulting quantity is convex on the set of mixed states.

The standard convex-roof is defined in terms of the ensemble containing, on average, the \emph{minimum} amount of entanglement as quantified, in this case, by the geometric measure of entanglement for pure states. Eq. \eqref{eq:connection1}  involves instead the ensemble containing on average the \emph{maximum} amount of entanglement. This corresponds to the \emph{concave}-roof construction
\beq
\label{eq:EGcap}
E_G^\cap(\rho)=\max_{\{p_i,\psi_i\}}\sum_ip_iE_G(\psi_i),
\eeq
where we use $^\cap$ to stress that in this way we are defining a concave function on the set of mixed states.

For the sake of comparison with quantities better known in literature, let us mention that in the same way in which the \emph{entanglement of formation}~\cite{bennett96b} $E_F(\rho^{AB})=\min_{\{p_i,\psi^{AB}_i\}}\sum_ip_iS(\rho_i^{A})$, with $\rho^A=\Tr_B(\proj{\psi^{AB}})$ and $S(\sigma)=-\Tr(\sigma\log_2\sigma)$ the von Neumann entropy of a state $\sigma$, is the paradigmatic example for a convex roof construction, the \emph{entanglement of assistance}~\cite{DiVincenzo:1999p1604}
\beq
\label{eq:Eass}
E_{\text{a}}= \max_{\{p_i,\psi^{AB}_i\}}\sum_ip_iS(\rho_i^{A})
\eeq
is the paradigmatic example for a concave-roof construction.

From \eqref{eq:connection1} it follows that the stochasticity is given by
\beq
\label{eq:relent}
\sigma(\Lambda)=d(1-E^\cap_G(J(\Lambda)),
\eeq
and, as a result, the relation between the probability of success $\psucc(\Lambda)$ for our scheme to realize a channel $\Lambda$ and the entanglement properties of the related Choi-Jamio{\l}kowski state $J(\Lambda)$ can be expressed as
\beq
\label{eq:connection2}
\psucc(\Lambda)=\frac{1}{d\left(1-E^\cap_G(J_{\Lambda})\right)}.
\eeq
We remark that, because $E^\cap_G$ is a concave function on states, the probability of success $\psucc$ is a convex function on channels, i.e.,
\[
\psucc((1-q)\Lambda_1+q\Lambda_2)\leq (1-q) \psucc(\Lambda_1) + q \psucc(\Lambda_2),
\]
for $0\leq q\leq 1$. Of course, this could be concluded directly from \eqref{eq:psucc}.

\section{Bounds\label{sub:Bounds}}

The evaluation of the stochasticity \eqref{eq:stochasticity} for a given channel is in general
a non-trivial computational problem. The connection with entanglement
that was developed in Section \ref{sec:Connect-Ent}, more precisely Eq. \eqref{eq:relent}, shows that calculating
the stochasticity is equivalent to evaluating $E^\cap_G(J({\Lambda}))$.  In principle, this requires to check for all possible ensemble decompositions of $J({\Lambda})$, although one can use convexity arguments to restrict the search to ensembles of $r^2$ pure states for a Choi-Jamio{\l}kowski state of rank $r$, similarly to the case of entanglement of formation \cite{Uhlmann:1998p1153}. In this section we will be able to  provide analytic upper and lower bounds that do not require any search.

Entanglement of assistance and other concave-roof constructions have not been studied as well as convex-roof constructions. This is due to the fact that they are not entanglement measures~\cite{Plenio:2007p1778,Gour:2006p2799}. Nonetheless, they are of interest because, e.g., they capture some properties of multipartite entanglement. For example, the entanglement of assistance quantifies the average amount of entanglement that two parties---Alice and Bob---can share thanks to a measurement of a third party who holds the purification of the state. Thus, we will be able to make use of some results already derived in literature, in particular in \cite{DiVincenzo:1999p1604} and \cite{Laustsen:2003p1210}, to provide upper and lower bounds for the stochasticity $\sigma$ and the probability of success $\psucc$. 

We first start by illustrating the range over which $\psucc$ can vary, illustrating the best and worst cases. We then identify a simple bound based uniquely on the mathematical properties of the operator norm. As we will see, such a bound will turn out to be pretty useful in investigating the examples of Section \ref{sec:examples}. We then proceed to consider bounds based on the entanglement properties of the Choi-Jamio{\l}kowski state isomorphic to the channel of interest.

\subsection{Best and worst cases\label{sub:extreme-pnts}}

Given that $E_G$---and therefore $E_G^\cap$---satisfies \eqref{eq:geobounds}, it follows from \eqref{eq:connection2} that
\beq
\label{eq:extremes}
\frac{1}{d}\leq \psucc(\Lambda) \leq 1.
\eeq
As has been pointed out earlier, the upper bound in \eqref{eq:extremes} can only be achieved by random-unitary channels, whose Choi-Jamio{\l}kowski states can be written as convex combinations of maximally entangled states. The lower bound corresponds to $E^\cap_G(J(\Lambda))=0$, i.e., to the case where no ensemble for $J(\Lambda)$ contains any entangled state. Such an occurrence was considered in the context of the study of the entanglement of assistance in~\cite{DiVincenzo:1999p1604}, where it was proved that any state $\rho^{AB}$ with vanishing entanglement of assistance
must be of the form $\rho^{AB}=\proj{\alpha}\otimes\rho^B$ or $\rho^{AB}=\rho^A\otimes\proj{\beta}$. Given that we are not considering general bipartite states, but states that are isomorphic to channels via the isomorphism \eqref{eq:CJ}, for the first inequality in \eqref{eq:extremes} to be saturated it must be  $J(\Lambda)=\proj{\alpha}\otimes I/d$. The latter condition implies that the output of the channel is a pure state independent of the input, i.e., $\Lambda[\rho]=\Tr(\rho)\proj{\alpha}$.

It may seem strange that the channel that is almost the most trivial theoretically is the one
that is the most difficult to implement under our constraints. One can provide the following intuitive explanation. The output state must be independent of the input, but at the same time still be in the encoding. Thus, the output state must include the photon of the input encoding, because the ancillary modes are initially in the vacuum state. This can be accomplished in the following way. In the scheme proposed in Fig. \ref{fig:BSrealization}, a random rotation is first applied to the input. Subsequently, $d-1$ of the encoding modes are measured while the remaining one is transmitted---that is,  the transitivity of $d-1$ of the $d$ beam splitters is set to 0, while the remaining one is set to 1. Upon finding the vacuum in the measured modes, we know that the photon is in the only unmeasured mode, i.e., in some known logical basis state of the encoding. Then we can rotate such a state to the desired output state. Given the random rotation of the input, the probability that this procedure succeeds is exactly $1/d$, independent of the input.

\subsection{Triangle-inequality bound \label{sub:Bounds-2}}

By using the triangle inequality, it is straightforward to derive an upper limit on the success probability.
\begin{observation}
\label{obs:convbound} \emph{(Triangle-inequality bound)}
For any quantum channel $\Lambda$,
\begin{equation}
\label{eq:convbound}
\psucc(\Lambda)\leq \frac{1}{\|\Lambda(I)\|_{\infty}}.
\end{equation}
\end{observation}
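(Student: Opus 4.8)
The plan is to bound the stochasticity $\sigma(\Lambda)=\sum_i\|A_i\|_\infty^2$ from below by $\|\Lambda(I)\|_\infty$ for \emph{every} Kraus decomposition $\{A_i\}$, since $\psucc(\Lambda)=1/\sigma(\Lambda)$ and the optimal decomposition minimizes $\sigma$. First I would fix an arbitrary Kraus decomposition $\{A_i\}$ of $\Lambda$ and write $\Lambda(I)=\sum_i A_i A_i^\dagger$. The key observation is that each summand $A_i A_i^\dagger$ is positive semidefinite, and the triangle inequality for the operator norm gives
\[
\|\Lambda(I)\|_\infty=\Bigl\|\sum_i A_i A_i^\dagger\Bigr\|_\infty\leq \sum_i\|A_i A_i^\dagger\|_\infty=\sum_i\|A_i\|_\infty^2,
\]
where the last equality uses the standard fact that $\|A A^\dagger\|_\infty=\|A\|_\infty^2$ (the largest singular value squared). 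This already yields $\|\Lambda(I)\|_\infty\leq\sigma(\Lambda)$ after minimizing the right-hand side over all Kraus decompositions, hence $\psucc(\Lambda)=1/\sigma(\Lambda)\leq 1/\|\Lambda(I)\|_\infty$.

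The main subtlety—more a point to state clearly than a genuine obstacle—is that $\Lambda(I)$ is decomposition-independent, so the chain of inequalities holds simultaneously for all Kraus decompositions, which is exactly what lets us pass to the minimizing decomposition on the right without weakening the bound. I would also note that positivity of $A_iA_i^\dagger$ is not strictly needed for the triangle inequality itself, but it makes the statement transparent and connects to the interpretation of $\Lambda(I)$.

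I expect no serious difficulty here; the proof is essentially two lines of norm manipulation. If anything, the only thing worth a remark is making explicit that the bound is generally not tight—equality in the triangle inequality for the operator norm requires the top eigenvectors of the $A_iA_i^\dagger$ to align, which need not happen—so this observation gives a genuine upper bound on $\psucc$ that will be convenient for the examples in Section~\ref{sec:examples} precisely because $\|\Lambda(I)\|_\infty$ is easy to compute directly from the channel without any optimization.
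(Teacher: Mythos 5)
Your proof is correct and follows essentially the same route as the paper: bound $\sum_i\|A_i\|_\infty^2=\sum_i\|A_iA_i^\dagger\|_\infty\geq\|\sum_iA_iA_i^\dagger\|_\infty=\|\Lambda(I)\|_\infty$ via the triangle inequality, noting that $\Lambda(I)$ is independent of the Kraus decomposition so the bound survives minimization. Nothing is missing.
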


\begin{proof}
If $\left\{ A_{i}\right\} $ is any Kraus decomposition for
the channel $\Lambda$ then we have for the stochasticity:
\begin{eqnarray*}
\sigma(\Lambda) & = & \min_{\left\{ A_{i}\right\} }\sum_i\|A_{i}\|_{\infty}^{2}\\
 & = & \min_{\left\{ A_{i}\right\} }\sum_{i}\|A_{i}A_{i}^{\dagger}\|_{\infty}\\
 & \ge & \min_{\left\{ A_{i}\right\} } \|\sum_{i}A_{i}A_{i}^{\dagger}\|_{\infty}\\
 & = & \|\Lambda(I)\|_{\infty},
 \end{eqnarray*}
 where the inequality is due to the triangle inequality, and the dependence on the choice of the Kraus decomposition is lost because $\sum_{i}A_{i}A_{i}^{\dagger}=\Lambda(I)$, for any Kraus decomposition of $\Lambda$.  
 \end{proof}

This bound proves that it is necessary for a channel to be unital in order for us to 
implement it deterministically using our scheme, because only for a unital channel $\|\Lambda(I)\|_{\infty}=1$. This is consistent with the already argued fact  that under our scheme only random-unitary channels can be deterministically 
implemented. The bound is easily evaluated, being independent of any particular Kraus decomposition. 

We remark that any choice of a specific Kraus decomposition provides a lower bound on the probability of success. If such a lower bound matches the upper bound in \eqref{eq:convbound}, then the given decomposition is proven to be optimal.

\subsection{Bounds based on entanglement properties of the Choi-Jamio{\l}kowski state\label{sub:Bounds-1}}

Now we will move to bounds that exploit the connection we observed between the success probability of our scheme and the entanglement properties of the Choi-Jamio{\l}kowski state $J(\Lambda)$.

\subsubsection{Qubit channels\label{sub:Analytic-Bounds}}

We will first focus on the qubit case. Not surprisingly, this is the case where we can employ most results from entanglement theory. In particular, we will mostly be concerned with the one of the most common entanglement measures, known as concurrence~\cite{Hill:1997p1217,wootters98a}. The concurrence of a pure two-qubit state can be expressed as
\[
C(\psi) = |\langle\tilde{\psi}|\psi\rangle|,
\]
where 
\[
|\tilde{\psi}\rangle = (\sigma_{y}\otimes\sigma_y)|\psi^{*}\rangle.
\]
with the complex conjugation taken in the computational basis and $\sigma_{y}=\begin{pmatrix}
0 & -i\\
i & 0\end{pmatrix}$. The definition extends to density matrices via the standard convex-roof construction:
 \[
C^\cup(\rho)=\min_{\{p_{i},\psi_{i}\}}\sum_{i}p_{i}C(\psi_{i}).
\]
It is straightforward to check that for a pure state the relation
\beq
\label{eq:relgeoconc}
E_G(\psi)=\frac{1}{2}\left(1-\sqrt{1-C(\psi)^{2}}\right)
\eeq
holds. In \cite{wei03a} it was argued that $C^\cup$ and $E_G^\cup$ are related by
$E_G^\cup=\frac{1}{2}\left(1-\sqrt{1-C^\cup(\rho)^2}\right)$.
We will instead be interested in the connection between $E_G^\cap$ and the concave-roof of the concurrence,
 \beq
 \label{eq:concconc}
C^\cap(\rho)=\max_{\{p_{i},\psi_{i}\}}\sum_{i}p_{i}c(\psi_{i}).
\eeq
The examples of Section \ref{sec:examples} will prove that the relation \eqref{eq:relgeoconc} does not hold for the concave-roof version of the two quantities. Nonetheless, in order to obtain easily computable bounds for $\psucc$, we will exploit the remarkable fact that there is a closed expression for both $C^\cup(\rho)$ and $C^\cap(\rho)$. For the former it reads~\cite{wootters98a}
\begin{eqnarray}
C^\cup(\rho) &=& \max\{0,\lambda_1-\lambda_2-\lambda_3-\lambda_4\}, \nonumber
\end{eqnarray}
where the $\lambda_i$'s are the eigenvalues of  $\sqrt{\sqrt{ \rho } \tilde{\rho} \sqrt{\rho}}$ in decreasing order, with the state $\tilde{\rho}=\sigma_{y}\otimes\sigma_{y}\rho^{*}\sigma_{y}\otimes\sigma_{y}$. For $C^\cap(\rho)$ instead it holds~\cite{Laustsen:2003p1210}:
\beq
C^\cap(\rho) = F(\rho,\tilde{\rho}), \label{eq:Ccapfidelity}
\eeq
with $F(\sigma,\tau)=\Tr\left(\sqrt{\sqrt{\sigma}\tau\sqrt{\sigma}}\right)$ the fidelity between two states $\sigma$ and $\tau$.
We start by providing the following lemma that relates $C^\cap(\rho)$ and $E_G^\cap(\rho)$.
\begin{lemma}
\label{lem:egeoconc}
Given any state $\rho$ of two qubits, the following inequalities hold:
\beq
\frac{1}{2}\left(1-\sqrt{1-{C^\cap(\rho)}^2}\right)\leq E^\cap_G(\rho)\leq\frac{C^\cap(\rho)}{2}
\eeq
\end{lemma}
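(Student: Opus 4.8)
The plan is to reduce everything to one elementary fact about the single-variable function that converts concurrence into geometric entanglement on pure states. Write $f(c)=\frac{1}{2}\bigl(1-\sqrt{1-c^{2}}\bigr)$, so that by \eqref{eq:relgeoconc} every pure two-qubit state satisfies $E_G(\psi)=f(C(\psi))$ with $c=C(\psi)\in[0,1]$. Two properties of $f$ on $[0,1]$ do all the work: (a) $f$ is convex, because $c\mapsto\sqrt{1-c^{2}}$ is concave (it is the upper arc of the unit circle), hence its negative, and so $f$, is convex; (b) the chord bound $f(c)\le c/2$ on $[0,1]$, which follows from (a) together with $f(0)=0$ and $f(1)=1/2$, since a convex function lies below the chord joining its endpoints.

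For the upper bound I would fix an ensemble $\{p_i,\psi_i\}$ attaining the concave roof $E_G^{\cap}(\rho)=\sum_i p_i E_G(\psi_i)$. Applying (b) termwise gives $E_G^{\cap}(\rho)=\sum_i p_i f(C(\psi_i))\le\frac{1}{2}\sum_i p_i C(\psi_i)$. Since $\{p_i,\psi_i\}$ is one particular decomposition of $\rho$, the definition of the concave roof of the concurrence gives $\sum_i p_i C(\psi_i)\le C^{\cap}(\rho)$, and the upper bound $E_G^{\cap}(\rho)\le C^{\cap}(\rho)/2$ follows.

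For the lower bound I would instead fix an ensemble $\{q_i,\phi_i\}$ attaining $C^{\cap}(\rho)=\sum_i q_i C(\phi_i)$. Because $E_G^{\cap}$ is a maximum over all ensembles, $E_G^{\cap}(\rho)\ge\sum_i q_i E_G(\phi_i)=\sum_i q_i f(C(\phi_i))$. Jensen's inequality for the convex function $f$, property (a), then yields $\sum_i q_i f(C(\phi_i))\ge f\bigl(\sum_i q_i C(\phi_i)\bigr)=f(C^{\cap}(\rho))=\frac{1}{2}\bigl(1-\sqrt{1-C^{\cap}(\rho)^{2}}\bigr)$, which is the claimed lower bound.

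There is no real obstacle beyond recording the convexity of $f$ and keeping track of which roof is optimized at each step: the upper bound spends the optimal $E_G^{\cap}$-ensemble and relaxes the concurrence sum, while the lower bound spends the optimal $C^{\cap}$-ensemble and relaxes via Jensen. The one point worth flagging is that both inequalities are generically strict, since $f$ is strictly convex on $[0,1)$: equality in the upper bound forces every $\psi_i$ in the optimal $E_G^{\cap}$-ensemble to be product or maximally entangled, and equality in the lower bound forces the optimal $C^{\cap}$-ensemble to be monochromatic in concurrence. As the examples of Section~\ref{sec:examples} show, these conditions typically fail, so that $E_G^{\cap}$ is genuinely not a fixed function of $C^{\cap}$.
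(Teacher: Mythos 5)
Your argument is correct and follows essentially the same route as the paper's proof: both rest on the pure-state relation $E_G(\psi)=\tfrac{1}{2}\bigl(1-\sqrt{1-C(\psi)^2}\bigr)$, the pointwise estimate $\tfrac{1}{2}\bigl(1-\sqrt{1-c^2}\bigr)\leq c/2$ for the upper bound, and concavity of $\sqrt{1-x^2}$ (equivalently, Jensen for your convex $f$) for the lower bound; fixing the optimal $E_G^\cap$- and $C^\cap$-ensembles rather than manipulating the maxima directly, and deriving $f(c)\le c/2$ from the chord argument instead of $\sqrt{1-x^2}\geq 1-x$, are only cosmetic differences.
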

\begin{proof}
See Appendix \ref{app:egeoconc}.
\end{proof}
By using the lemma together with the relation \eqref{eq:connection2} for $d=2$, and \eqref{eq:Ccapfidelity}, we immediately obtain the following result:
\begin{theorem} \emph{(Concurrence bounds)}
\label{thm:conc-bnd}
If $J(\Lambda)$ is the Choi-Jamio{\l}kowski state isomorphic to
the qubit channel $\Lambda$, then
\begin{multline}
\frac{1}{2-F(J({\Lambda}),\tilde{J}({\Lambda}))}
\ge
\psucc(\Lambda)
\\\ge\frac{1}{1+\sqrt{1-F(J({\Lambda}),\tilde{J}({\Lambda}))^{2}}}.\label{eq:conc-bnd-p}
\end{multline}
\end{theorem}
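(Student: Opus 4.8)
The plan is to treat Theorem~\ref{thm:conc-bnd} as an immediate corollary of Lemma~\ref{lem:egeoconc} combined with Eq.~\eqref{eq:connection2}, so that no genuinely new argument is needed beyond monotonicity and substitution. First I would specialize Eq.~\eqref{eq:connection2} to $d=2$, writing $\psucc(\Lambda)=\frac{1}{2\left(1-E^\cap_G(J(\Lambda))\right)}$. Since the concave roof inherits the range of the pure-state geometric measure from \eqref{eq:geobounds}, we have $E^\cap_G(J(\Lambda))\in[0,1/2]$, so the scalar function $x\mapsto \frac{1}{2(1-x)}$ is well-defined and strictly increasing on the relevant interval; in particular there is no division by zero because $E^\cap_G\le 1/2<1$. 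Hence any lower (respectively upper) bound on $E^\cap_G(J(\Lambda))$ converts directly into a lower (respectively upper) bound on $\psucc(\Lambda)$.

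Next I would insert the two bounds of Lemma~\ref{lem:egeoconc}, evaluated at $\rho=J(\Lambda)$. The upper bound $E^\cap_G\le C^\cap/2$ gives $\psucc(\Lambda)\le \frac{1}{2\left(1-C^\cap(J(\Lambda))/2\right)}=\frac{1}{2-C^\cap(J(\Lambda))}$. The lower bound $E^\cap_G\ge\tfrac12\left(1-\sqrt{1-{C^\cap}^2}\right)$ gives, after simplifying the denominator via $2\left(1-\tfrac12\left(1-\sqrt{1-{C^\cap}^2}\right)\right)=1+\sqrt{1-{C^\cap}^2}$, the bound $\psucc(\Lambda)\ge \frac{1}{1+\sqrt{1-C^\cap(J(\Lambda))^2}}$. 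Finally I would substitute the closed form $C^\cap(\rho)=F(\rho,\tilde{\rho})$ from Eq.~\eqref{eq:Ccapfidelity} with $\rho=J(\Lambda)$, i.e. $C^\cap(J(\Lambda))=F(J(\Lambda),\tilde{J}(\Lambda))$, which turns the two displayed inequalities into exactly \eqref{eq:conc-bnd-p}.

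The only real work — and hence the place where any obstacle lies — is Lemma~\ref{lem:egeoconc} itself, which is deferred to Appendix~\ref{app:egeoconc}; at the level of this theorem the derivation is purely bookkeeping, the one point worth stating explicitly being that the substitution is legitimate because $x\mapsto\frac{1}{2(1-x)}$ is monotone on $[0,1/2]$ and stays finite there. It is also worth remarking, as the surrounding text does, that the two bounds do \emph{not} collapse to a single equality: the pure-state identity \eqref{eq:relgeoconc} fails for the concave-roof versions of $E_G$ and $C$, so the gap between the upper and lower bounds in \eqref{eq:conc-bnd-p} is genuine and is illustrated by the examples of Section~\ref{sec:examples}.
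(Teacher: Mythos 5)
Your proposal is correct and follows essentially the same route as the paper: Theorem~\ref{thm:conc-bnd} is obtained by combining Lemma~\ref{lem:egeoconc} with Eq.~\eqref{eq:connection2} at $d=2$ and the closed form $C^\cap(\rho)=F(\rho,\tilde{\rho})$ from Eq.~\eqref{eq:Ccapfidelity}, exactly as you describe. Your explicit remarks on the monotonicity and finiteness of $x\mapsto \frac{1}{2(1-x)}$ on $[0,1/2]$ are a small but welcome addition to what the paper leaves implicit.
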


\subsubsection{Qudit channels and entanglement of assistance}

In the previous section we focused on the concurrence because its concave-roof version $C^\cap$ can be easily evaluated. Concurrence was generalized to higher-dimensional systems in a number of different ways~\cite{Audenaert:2001p2807,Rungta:2001p2810,Gour:2005p2813}, and even high-dimensional ``assisted'' versions---i.e., concave-roof constructions---were considered~\cite{Gour:2005p2690}. As we mentioned, the most studied example of concave-roof construction is the entanglement of assistance \eqref{eq:Eass}.
For this reason, we will provide bounds for the probability of success in terms of the entanglement of assistance.

We will use the following generalization of the binary entropy that depends only on the number of possible outcomes, $d$, and one probability parameter, $p$:
\beq
h_d(p):=-p \log_2 p - (1-p) \log_2 \left(\frac{1-p}{d-1}\right).
\eeq
That is, $h_d(p)$ is the Shannon entropy of the probability distribution of $d$ symbols $(p,\frac{1-p}{d-1},\dots,\frac{1-p}{d-1})$, with one symbol having probability $p$ and the remaining $d-1$ symbols being equally likely. It coincides with the binary entropy for $d=2$. We remark that $h_d(p)$ is a concave function of $p$, and is monotonically decreasing for $p\geq1/d$. This means that the inverse function $h_d^{-1}:[0,\log_2d]\rightarrow[1/d,1]$ is well defined.

We are now ready to state the theorem that links entanglement of assistance and probability of success.
\begin{theorem}\emph{(Entanglement-of-assistance bounds)}
\label{e-assist-bnd}
For a given qudit channel $\Lambda$, the following inequalities hold:
\beq
\frac{2^{E_{\textup{a}}(J_{\Lambda})}}{d}\geq\psucc(\Lambda)\geq \frac{1}{dh_{d}^{-1}(E_{\textup{a}}(J_{\Lambda}))}
\eeq
where $ E_{\textup{a}}$ is the entanglement of assistance, $J({\Lambda})$ is the Choi-Jamio{\l}kowski state isomorphic to the channel, and $\sigma(\Lambda)$ is the stochasticity of the channel.
\end{theorem}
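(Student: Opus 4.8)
The plan is to reduce the statement to two elementary pure-state inequalities and then lift them to ensembles by Jensen's inequality, using the identities already available: $\sigma(\Lambda)=d\,\min_{\{p_i,\psi_i\}}\sum_i p_i\,\lambda_{\max}(\psi_i)$ from \eqref{eq:stocschmidt}, and $\psucc(\Lambda)=1/\sigma(\Lambda)$. Throughout, for a pure bipartite state $\psi$ with Schmidt spectrum $\{\lambda_i(\psi)\}$ I write $S(\psi)=-\sum_i\lambda_i\log_2\lambda_i$ for the entropy of either reduced state, so that $\sum_i p_i S(\psi_i)\le E_{\textup{a}}(J(\Lambda))$ for every ensemble $\{p_i,\psi_i\}$ of $J(\Lambda)$, with equality for the optimal assisting ensemble.

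\medskip\noindent
First I would establish, for every pure bipartite state $\psi$ on $\mathbb{C}^d\otimes\mathbb{C}^d$, the sandwich
\[
2^{-S(\psi)}\;\le\;\lambda_{\max}(\psi)\;\le\;h_d^{-1}\!\big(S(\psi)\big).
\]
The left inequality is immediate: since $\lambda_i\le\lambda_{\max}(\psi)$ for all $i$, one has $S(\psi)=\sum_i\lambda_i\log_2(1/\lambda_i)\ge\log_2\!\big(1/\lambda_{\max}(\psi)\big)$. For the right inequality, note that among probability vectors supported on $r\le d$ symbols with prescribed largest entry $p$, the Shannon entropy is maximized by $(p,\tfrac{1-p}{r-1},\dots,\tfrac{1-p}{r-1})$ — the ``one large, rest uniform'' vector, which is admissible precisely because $p=\lambda_{\max}\ge 1/r$. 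Hence $S(\psi)\le h_r(\lambda_{\max}(\psi))\le h_d(\lambda_{\max}(\psi))$, the last step because $h_d(p)-h_r(p)=(1-p)\log_2\frac{d-1}{r-1}\ge0$ for $r\le d$. Since $h_d$ is decreasing on $[1/d,1]$ and $\lambda_{\max}(\psi)\ge 1/d$, applying $h_d^{-1}$ gives $\lambda_{\max}(\psi)\le h_d^{-1}(S(\psi))$.

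\medskip\noindent
For the upper bound on $\psucc$: for \emph{any} ensemble $\{p_i,\psi_i\}$ of $J(\Lambda)$, convexity of $x\mapsto 2^{-x}$ together with $\sum_i p_i S(\psi_i)\le E_{\textup{a}}(J(\Lambda))$ yields
\[
\sum_i p_i\,\lambda_{\max}(\psi_i)\;\ge\;\sum_i p_i\,2^{-S(\psi_i)}\;\ge\;2^{-\sum_i p_i S(\psi_i)}\;\ge\;2^{-E_{\textup{a}}(J(\Lambda))},
\]
so minimizing over ensembles gives $\sigma(\Lambda)\ge d\,2^{-E_{\textup{a}}(J(\Lambda))}$, i.e.\ $\psucc(\Lambda)\le 2^{E_{\textup{a}}(J(\Lambda))}/d$. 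For the lower bound on $\psucc$ I would instead insert the ensemble $\{q_j,\phi_j\}$ that \emph{attains} $E_{\textup{a}}(J(\Lambda))$ into \eqref{eq:stocschmidt}, use the right-hand pure-state inequality, and then Jensen's inequality for the concave map $h_d^{-1}$:
\[
\sigma(\Lambda)\;\le\;d\sum_j q_j\,\lambda_{\max}(\phi_j)\;\le\;d\sum_j q_j\,h_d^{-1}\!\big(S(\phi_j)\big)\;\le\;d\,h_d^{-1}\!\Big(\sum_j q_j S(\phi_j)\Big)\;=\;d\,h_d^{-1}\!\big(E_{\textup{a}}(J(\Lambda))\big),
\]
which is exactly $\psucc(\Lambda)\ge 1/\big(d\,h_d^{-1}(E_{\textup{a}}(J(\Lambda)))\big)$.

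\medskip\noindent
The step I expect to be the main obstacle is the right-hand pure-state inequality combined with the concavity of $h_d^{-1}$: one must argue carefully that the entropy-maximizing spectrum with a fixed maximal weight is the ``one large, rest uniform'' distribution, correctly track the Schmidt rank $r\le d$ versus the parameter $d$ appearing in $h_d$, and verify that $h_d^{-1}$ is concave on $[0,\log_2 d]$. The latter follows from $h_d$ being concave and monotonically decreasing on $[1/d,1]$ (the inverse of a decreasing concave bijection is concave, since its hypograph is a reflection of that of $h_d$), a property already recorded in the setup. The remaining manipulations are routine applications of Jensen's inequality and of $\psucc=1/\sigma$.
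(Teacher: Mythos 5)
Your proposal is correct and follows essentially the same route as the paper's proof: both directions rest on the same pure-state facts (min-entropy $\leq$ Shannon entropy for the upper bound on $\psucc$, and the ``one large, rest uniform'' estimate $S(\psi)\leq h_d(\lambda_{\max}(\psi))$ for the lower bound), combined with Jensen's inequality and the identity $\sigma(\Lambda)=d\min_{\{p_i,\psi_i\}}\sum_i p_i\lambda_{\max}(\psi_i)$. The only cosmetic difference is that you insert the $E_{\textup{a}}$-optimal ensemble and apply Jensen to the decreasing concave inverse $h_d^{-1}$, whereas the paper applies concavity of $h_d$ (and of $\log$) to arbitrary ensembles and inverts only at the end; your extra ingredient, the concavity of $h_d^{-1}$, is correctly justified.
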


\begin{proof}
See Appendix \ref{app:eass}.
\end{proof}

\section{Examples}
\label{sec:examples}

In this section, we consider two examples for the qubit case: (i) the amplitude-damping channel
and (ii) the probabilistic constant-output map. For these examples we are able to find analytic
results for the probability of success, and we compare these exact results
with the bounds we obtained in Section \ref{sub:Bounds-1}. The analytic results are obtained by using the triangle-inequality bound of Observation~\ref{obs:convbound} and the already remarked fact that any specific Kraus decomposition provides a upper (lower) bound on the stochasticity (optimal probability of success).

\subsection{Amplitude-damping channel}

The qubit amplitude-damping channel is used to model the decay of an excited
state $\ket{1}$ into the ground state $\ket{0}$. With probability $\epsilon$ the
channel causes the de-excitation of the input state. This de-excitation
process is described by the Kraus operator $A_{1}=\sqrt{\epsilon}|0\rangle\langle1|$.
A second Kraus operator guarantees that the process preserves probability, i.e., that the channel is trace-preserving: $A_{2}=|0\rangle\langle0|+\sqrt{1-\epsilon}|1\rangle\langle1|$.  

Using this specific decomposition and Observation~\ref{obs:convbound} we find:
\[
\|\Lambda(I)\| _{\infty}\le\sigma(\Lambda)\le\|A_{1}\|^{2}_{\infty}+\|A_{2}\|^{2}_{\infty}.
\]
For the lower bound, one finds
\begin{eqnarray*}
\|\Lambda(I)\|_\infty & = & \|A_{1}A_{1}^{\dagger}+A_{2}A_{2}^{\dagger}\|_\infty\\
 & = & \left\|\left(\begin{array}{cc}
1 & 0\\
0 & 1-\epsilon\end{array}\right)+\left(\begin{array}{cc}
\epsilon & 0\\
0 & 0\end{array}\right)\right\|_\infty\\
 & = & 1+\epsilon,
 \end{eqnarray*}
and for the upper bound we get:
\begin{eqnarray*}
\|A_{1}\|^{2}_{\infty}+\|A_{2}\|^{2}_{\infty} & = & 1+\epsilon.\end{eqnarray*}
Because these two bounds coincide, the Kraus decomposition is optimal, with a stochasticity of $\sigma(\Lambda)=1+\epsilon$ and the optimal success probability $\psucc(\Lambda)=(1+\epsilon)^{-1}$.
When $\epsilon=0$ the channel is trivially the identity channel and can be performed deterministically.
However at the other extreme, $\epsilon=1$, the channel becomes the constant
map $\Lambda(\rho)=\Tr(\rho)\proj{0}$, and it can only be realized with  probability $50\%$. As we
found in Section \ref{sub:extreme-pnts}, the constant map is the map
that has the lowest success probability in our scheme. Therefore the
parameter $\epsilon$ that describes the probability of de-excitation
let us move from one extreme to the other of the stochasticity (or probability of success).
For the amplitude-damping channel we find that the concave-roof of concurrence \eqref{eq:concconc} satisfies $C^{\cap}(J(\Lambda))=F(J(\Lambda),\tilde{J}(\Lambda))=\sqrt{1-\epsilon}$. We can then compare the analytic bounds from \eqref{eq:conc-bnd-p} with the exact result we just found (see Figure \ref{fig:AD}):
\beq
\frac{1}{2-\sqrt{1-\epsilon}}\ge \psucc(\Lambda)=\frac{1}{1+\epsilon}\ge\frac{1}{1+\sqrt{\epsilon}}.
\label{eq:bounds}
\eeq
\begin{figure}
\begin{center}
\includegraphics[scale=0.5]{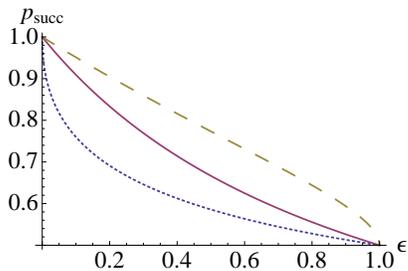}
\end{center}
\caption{Optimal probability of success for the amplitude damping channel (color online): exact (solid line), and upper and lower bounds in Eq. (\ref{eq:bounds}) (dashed and fine-dashed lines, respectively).}
\label{fig:AD}
\end{figure}
We remark that our scheme achieves a probability of success that depends on $\epsilon$ and is close to $100\%$ for $\epsilon$ small. On the contrary, the scheme of~\cite{Qing:2007p1454} has a $50\%$ probability of success, independently of $\epsilon$. In our scheme, such a low success probability is just the worst case ($\epsilon=1$).

\subsection{Probabilistic constant-output channel}

The second channel that we choose to analyze is a convex combination of the constant
output channel and the identity map. Such a channel returns the
input state with a probability of $1-p$ or a fixed output state
$\tau$ with probability $p$. The map is then
\[
\Lambda:\rho\mapsto(1-p)\rho+p\:\Tr(\rho)\tau
\]
and its Choi-Jamio{\l}kowski isomorphic state is simply
\[
J({\Lambda})=(1-p)\proj{\psi_{2}^{+}}+p\,\tau\otimes \frac{I}{2}.
\]
We find the stochasticity of this channel by checking that the upper and
lower bounds for the stochasticity that are generated from Observation 1 and a particular decomposition of the
state match. Without loss of generality we can consider $\tau$ to be diagonal in the computational basis, i.e.,
\beq
\label{eq:tau}
\tau=s|0\rangle\langle0|+\left(1-s\right)|1\rangle\langle1|.
\eeq
In fact, only the degree of mixedness of $\tau$ and not the specific basis influences the probability of success $\psucc$. This can be understood at the formal level by considering that a different choice of basis for $\tau$ can be taken into account via a rotation $U$, which does not influence the entanglement properties of $J(\Lambda)$ (see Appendix \ref{app:invariance}).

Using this state, the bound stated in Observation \ref{obs:convbound} becomes:
\beq
\label{eq:stocfixed}
\sigma(\Lambda)\ge\|\Lambda(I)\|_\infty=\|(1-p)I+2p\tau\|_\infty=1-p+2ps,
\eeq
where we have assumed, without loss of generality, that $s\ge 1/2$. From this and by using Eq.~\eqref{eq:relent}, we find $E_G^\cap(J(\Lambda))\leq 1/2-p(s-1/2)$. One can find an ensemble decomposition of $J(\Lambda)$ that saturates the latter inequality (see Appendix \ref{app:saturation}), therefore $\psucc = (1-p+2ps)^{-1}$.

This means that for this channel we also find that as the probability parameter $p$ varies from $0$ to $1$ we move from the identity map to a constant map. However, we can see from Figure \ref{fig:constout} 
that the success probability of the constant map depends on how mixed the output state is. As expected from the discussion of Section \ref{sub:extreme-pnts}, the lowest value for the success probability, $\psucc(\Lambda)=1/2$, is only attained when the constant output state is pure.

\begin{figure}
\begin{center}
\includegraphics[scale=0.5]{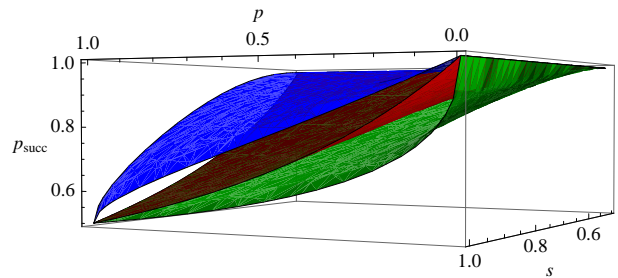}
\end{center}
\caption{ Optimal probability of success for the probabilistic constant-output channel, exact result and bounds (color online). Upper and lower surfaces (blue and green online, respectively) are the upper and lower concurrence bounds from Eq. (\ref{eq:conc-bnd-p}), evaluated for the Choi-Jamio{\l}kowski state isomorphic to the channel; the in-between surface (red online) is the exact success probability $\psucc$ for realizing the channel. Because $\psucc$ is symmetric about the $s=1/2$ plane we only plot this function for $s \in[1/2,1]$. }
\label{fig:constout}
\end{figure}

\section{Discussion}

We have provided a scheme to realize an arbitrary channel on a $d$-rail-encoded optical qudit, taking into account practical restrictions. In particular: (i) we only allow for operations that are realizable with high fidelity using linear optics; (ii) we only allow ancillary modes that are initially in the vacuum state, thus limiting the need for sources of single photons that are, as of now, still difficult to produce on demand; (iii) we do not allow feed-forward (i.e., adaptive schemes), which significantly reduces the cost of the necessary equipment and loss that is inevitably involved in such schemes due to the need of long fibers for optical delays; (iv) and we consider only photon-number measurements (actually, readily available threshold detectors suffice). The conventional linear optics toolbox (phase shifters and beamsplitters), as well as the possibility for randomly switching between different optical networks, are the only elements needed for the realization of our method. These restrictions render our technique of immediate interest to linear-optical implementations that can be realized using state-of-the-art experimental techniques. Within this framework, it turns out that any channel can, in principle, be realized perfectly, albeit only stochastically. The only channels that can be realized deterministically are random-unitary channels. Given that post-selection is a commonly used technique in linear-optics experiments, this restriction effectively only slightly reduces the success probability of an experimental realization, and we are able to provide an expression for the optimal probability of success. This probability turns out to be related to the entanglement properties of the Choi-Jamio{\l}kowski state isomorphic to the channel of interest. More precisely, we were led to evaluate the ``assisted version'' of the geometric measure of entanglement, i.e., the concave-roof extension of the measure to mixed states, for such a state. While we are not aware of a closed formula for it, not even for two-qubit states, we were able to provide upper and lower bounds in terms of the concave-roof of concurrence (for qubits) and of entanglement of assistance (for general qudits).

Besides tackling the problem of evaluating, in general, the concave-roof of the geometric measure of entanglement, i.e., the probability of successful realization of our scheme with the restrictions considered in this paper, future research will focus on the relaxation of said restrictions, that is, on the analysis of more general schemes for the realization of channels.

For example, the use of ancillary states that are not initially in the vacuum certainly improves the realization of certain channels. Indeed, we saw that the worst-case scenario is that of a channel with a fixed---i.e., independent of the input---pure output. We argued that the difficulty---that is, the low probability of success---in the realization of such a channel is essentially due to the necessity of using for the output the same single photon by which the input logical state is encoded in the $d$ modes. This is exactly because no photons are available in the ancillary ports. Obviously, if such a fixed pure output is readily available as ancillary state, the realization of the pure-fixed-output channel becomes trivial: the ancillary input state becomes the output. It is therefore evident that introducing non-vacuum ancillas would strongly affect the performance of our scheme.
Another addition that we plan to consider is feed-forward, that is becoming a powerful and reliable tool in linear-optics quantum information processing~\cite{prevedel07a, Biggerstaff2009a}.

Another possible line of research is that of focusing  on channels that are linked to interesting effects in quantum information processing. Indeed, our results can be thought of as a toolbox to be used in any optical experiment where some specific channel has to be applied, be it for the sake of simulating noise or for implementing a specific protocol. 

We expect our findings to trigger further theoretical studies on
channel realization. In particular, we linked channel realization with more
abstract notions of entanglement theory, and we hope that the study of less explored entanglement properties of states will consequently be stimulated. From a more practical point of view, our results
provide a simple method for realizing arbitrary quantum channels using
linear optics and standard experimental techniques. Our results are ideal for experimental implementation relying on linear optics in combination with post-selection. While quantum channels have been a widely discussed topic in theoretical quantum information, we expect our work to trigger an increased interest in the experimental study of this intriguing topic.

We thank Antonio Ac\'in, Dominic Berry, Robert Prevedel, Kevin Resch and Armin Uhlmann for helpful discussions. We acknowledge support from NSERC, QuantumWorks, Ontario Centres of Excellence, CFI, Ontario MRI and ERA.


\appendix

\section{Proof of Lemma \ref{lem:egeoconc}}
\label{app:egeoconc}

\begin{proof}
By substituting \eqref{eq:relgeoconc} in the definition \eqref{eq:EGcap} of $E^\cap_G(\rho)$ we obtain $E_G^\cap(\rho)=\max_{\{p_i,\psi_i\}} \sum p_{i}\frac{1}{2}\left(1-\sqrt{1-C(\psi_{i})^{2}}\right)$. Then, in order to obtain the lower bound it is sufficient to observe that $\sqrt{1-x^2}$ is a concave  function in $x$ that is monotonically decreasing:
\beq
\begin{split}
E_G^\cap(\rho) & = \max_{\{p_i,\psi_i\}} \sum p_{i}\frac{1}{2}\left(1-\sqrt{1-C(\psi_{i})^{2}}\right)\\
& \geq \max_{\{p_i,\psi_i\}} \frac{1}{2}\left(1-\sqrt{1-\left[\sum p_{i}C(\psi_{i})\right]^{2}}\right)\\
 & = \frac{1}{2}\left(1-\sqrt{1-\left[ \max_{\{p_i,\psi_i\}}\sum p_{i}C(\psi_{i})\right]^{2}}\right)\\
 & = \frac{1}{2}\left(1-\sqrt{1-{C^\cap(\rho)}^2}\right).
 \end{split}
 \eeq
The upper bound can be derived from the relation $\sqrt{1-x^2}\geq 1-x$. 
\end{proof}

\section{Proof of Theorem \ref{e-assist-bnd}}
\label{app:eass}

\begin{proof}
We will use properties of the Shannon entropy $H(\{r_i\})=-\sum_i r_i \log_2 r_i$, defined for a probability distribution $\{r_i\}$. The von Neumann entropy of a quantum state $\rho$ is equal to the Shannon entropy of its eigenvalues $r_i$. In particular, for a pure bipartite state with Schmidt decomposition $\ket{{\psi}}^{AB}=\sum_i\sqrt{\lambda_i}\ket{i}_A\ket{i}_B$, the entropy of the reduced one-party states $\rho^A$ and $\rho^B$ is $H(\{\lambda_i\})$. For any pure ensemble $\{p_a,\psi^{AB}_a\}$ we will denote by $\{\lambda_i^a\}_i$ the set of the squares of the Schmidt coefficients of $\psi^{AB}_a$, and define $\lambda_{a,\textup{max}}=\max_{i}\{\lambda_i^a\}_i$. For the entanglement of assistance it then holds
\begin{multline}
E_{\textup{a}}(J_{\Lambda})\\
\begin{aligned}
& = \max_{\{p_{a},\psi_{a}\}}\sum_{a}p_{a}S(\rho^A_a)\\
& = \max_{\{p_{a},\psi_{a}\}}\sum_{a}p_{a}H(\{\lambda^a_{i}\}_i)\\
 & \le \max_{\{p_{a},\psi_{a}\}}\sum p_{a}H\left(\lambda_{a,\textup{max}},\frac{1-\lambda_{a,\textup{max}}}{d-1},\ldots,\frac{1-\lambda_{a,\textup{max}}}{d-1}\right)\label{eq:entas_up1}\\
 & \le \max_{\{p_{a},\psi_{a}\}} h_{d}\left(\sum_{a}p_{a}\lambda_{a,\textup{max}}\right)\\
 & =  h_{d}\left(\min_{\{p_{a},\psi_{a}\}}\sum_{a}p_{a}\lambda_{a,\textup{max}}\right)\\
 & = h_{d}\left(\frac{\sigma(\Lambda)}{d}\right).
\end{aligned}
 \end{multline}
The first inequality is due to the fact that substituting any subset of probabilities of some distribution with equally weighted probabilities can only increase the total Shannon entropy. This is easily checked by knowing that the flat probability distribution is the one with highest Shannon entropy, and that for any grouping of probabilities $\{r_i\}$ into two subsets $\{r^{(1)}_i\}$ and $\{r^{(2)}_i\}$ of weight $q$ and $1-q$, respectively, we have $H(\{r_i\})=h_2(q)+qH(\{r^{(1)}_i/q\})+(1-q) H(\{r^{(2)}_i/(1-q)\})$. The second inequality is due to the concavity of entropy. The second to last equality is due to the the monotonicity of $h_d$ in the interval $[1/d,1]$. Indeed, $\sum_{a}p_{a}\lambda_{a,\textup{max}}\geq 1/d$ because $\lambda_{a,\textup{max}}\geq 1/d$ for all $a$. Finally, the last equality comes from the relation \eqref{eq:stocschmidt}.
Thus, using the fact that $h_d$ is invertible and monotonically decreasing in the range of interest, we obtain $\sigma(\Lambda)\leq dh_{d}^{-1}(E_{\textup{a}}(J_{\Lambda}))$, i.e., $\psucc\geq1/\left[dh_{d}^{-1}(E_{\textup{a}}(J_{\Lambda}))\right]$.

For the upper bound we have
\begin{eqnarray}
E_{\textup{a}}(\rho_{\Lambda}) & = & \max_{\{p_{a},\psi_{a}\}}\sum p_{a}H(\{\lambda^a_{i}\}_i)\nonumber \\
 & \ge & \max_{\{p_{a},\psi_{a}\}}\sum_{a}\left(-p_{a}\log_2\left(\lambda_{a,\textup{max}}\right)\right)\label{eq:ent-ass-lbnd2}\\
 & \ge & \max_{\{p_{a},\psi_{a}\}}\left(-\log_2\left(\sum p_{a}\lambda_{a,\textup{max}}\right)\right)\label{eq:ent-ass-lbnd1}\\
 & = & -\log_2\left(\min_{\{p_{a},\psi_{a}\}}\sum p_{a}\lambda_{a,\textup{max}}\right)\nonumber \\
 & = & -\log_2\left(\frac{\sigma(\Lambda)}{d}\right).\nonumber \end{eqnarray}
 The first inequality comes from the fact that the min-entropy $H_{\textup{min}}(\{r_i\})=-\log_2 r_{\textup{max}}$ of a probability distribution $\{r_i\}$, with $r_{\textup{max}}=\max\{r_i\}$, satisfies $H_{\textup{min}}(\{r_i\})\leq H(\{r_i\})$. The second inequality is due to the concavity of the logarithm. The second-to-last equality is due to the monotonicity of the logarithm. We finally arrive at the desired relation by exponentiation.
\end{proof}

\section{Basis independence for the probabilistic constant-output channel}
\label{app:invariance}

Suppose $\tau'=U\tau U^\dagger$;
then
\begin{multline*}
J(\Lambda)\\
\begin{aligned}
&=(1-p)\proj{\psi_{d}^{+}}+p\,\tau'\otimes \frac{I}{2}\\
	&=(1-p)\proj{\psi_{d}^{+}}+p\,U\tau U^\dagger\otimes \frac{I}{2}\\
	&=(U\otimes U^*)\left[(1-p)\proj{\psi_{d}^{+}}+p\,\tau \otimes \frac{I}{2}\right](U\otimes U^*)^\dagger,\\	
\end{aligned}
\end{multline*}
where we have used the invariance of the maximally entangled state $\ket{\psi_d^+}=(U\otimes U^*)\ket{\psi_d^+}$, valid for all unitaries $U$.

\section{Decomposition saturating the bound \eqref{eq:stocfixed}}
\label{app:saturation}

One can write
the Choi-Jamio{\l}kowski state as the convex combination
\beq
\begin{split}
J(\Lambda)&=(1-p)\proj{\psi_{2}^{+}}+2p(1-s)\frac{I}{2}\otimes\frac{I}{2}\\
			&+p\big(s-(1-s)\big)|0\rangle\langle0|\otimes \frac{I}{2},
\end{split}
\eeq
such that for the concave-roof of the geometric measure we find
\[
\begin{aligned}
E_G^\cap(J(\Lambda) &\geq (1-p)E^\cap_G(\proj{\psi_{2}^{+}})\\
			&+2p(1-s)E_G^\cap\left(\frac{I}{2}\otimes\frac{I}{2}\right)\\
			&+p\big(s-(1-s)\big)E_G^\cap\left(|0\rangle\langle0|\otimes \frac{I}{2}\right)\\
			&= (1-p) \frac{1}{2}+ 2p(1-s) \frac{1}{2}\\
			&= 1/2-p(s-1/2).
\end{aligned}
\]
Here we used that fact that $E_G^\cap\left(|0\rangle\langle0|\otimes I/2\right)=0$---see the discussion just after Eq. \eqref{eq:extremes}---and that $E^\cap_G\left(I/2\otimes I/2\right)=1/2$, because the maximally mixed state of two qubits can be seen as the convex combination of pure maximally entangled states.

\end{document}